\newtheorem{thm}{Theorem}[section]
\newtheorem{prop}[thm]{Proposition}
\newtheorem{defn}[thm]{Definition}
\renewcommand{\v}[1]{\ensuremath{\mathbf{#1}}} 
\newcommand{\abs}[1]{\left| #1 \right|} 
\newcommand{\ket}[1]{| #1 \rangle } 
\newcommand{\bra}[1]{\langle  #1 |} 
\newcommand{\braket}[2]{\langle  #1 \vphantom{#2} | #2 \vphantom{#1} \rangle } 
\newcommand{\ketbra}[2]{|  #1 \vphantom{#2} \rangle \langle #2 \vphantom{#1} | } 
\newcommand{\diracprod}[3]{\left\langle  #1 \vphantom{#2#3} \right|#2 \left| #3 \vphantom{#1#2} \right\rangle } 
\let\baraccent=\= 
\renewcommand{\=}[1]{\stackrel{#1}{=}} 
\def\C{\mathcal{C}}
\def\M{\mathcal{M}}
\def\P{\mathcal{P}}
\def\Q{\mathcal{Q}}
\def\I{\mathcal{I}}
\def\T{\mathcal{T}}
\def\H{\mathcal{H}}
\def\Tr{\text{Tr}}
\def\id{\mathbbm{1}}
\title{Holographic Tensor Networks as Tessellations of Geometry}
\author{Qiang Wen$^{a}$,}
\author{Mingshuai Xu$^{a}$,}
\author{Haocheng Zhong$^{a}$}
\affiliation{$^a$Shing-Tung Yau Center and School of Physics, Southeast University, Nanjing 210096, China}
\emailAdd{wenqiang@seu.edu.cn}
\emailAdd{xumingshuai@seu.edu.cn}
\emailAdd{zhonghaocheng@seu.edu.cn}
\abstract{Holographic tensor networks serve as toy models for the Anti-de Sitter/Conformal Field Theory (AdS/CFT) correspondence, capturing many of its essential features in a concrete manner. However, existing holographic tensor network models remain far from a complete theory of quantum gravity. A key obstacle is their discrete structure, which only approximates the semi-classical geometry of gravity in a qualitative sense. In \cite{Lin:2024dho}, it was shown that a network of partial-entanglement-entropy (PEE) threads, which are bulk geodesics with a specific density distribution, generates a perfect tessellation of AdS space. Moreover, such PEE-network tessellations can be constructed for more highly symmetric geometries using the Crofton formula. In this paper, we assign a quantum state to each vertex in the PEE network and develop several holographic tensor network models: (1) the factorized PEE tensor network, which takes the form of a tensor product of EPR pairs; (2) the HaPPY-like PEE tensor network constructed from perfect tensors; and (3) the random PEE tensor network. In all these models, we reproduce the exact Ryu-Takayanagi formula by showing that the minimal number of cuts along a homologous surface in the network exactly equals the area of that surface.}
\begin{document} 
\maketitle
\flushbottom

\section{Introduction}

The Anti-de Sitter/Conformal Field Theory (AdS/CFT) correspondence \cite{Witten:1998qj,Gubser:1998bc,Maldacena:1997re}, also known as the holographic principle, is a profound framework in theoretical physics that establishes an equivalence between a gravitational theory in a higher-dimensional spacetime and a quantum field theory on its lower-dimensional boundary. It provides a non-perturbative definition of quantum gravity in AdS space in terms of a well-defined quantum field theory without gravity. Moreover, the duality has deep connections to quantum information theory. For example, concepts like entanglement wedge reconstruction (or subregion duality) \cite{Czech:2012bh,Hamilton:2006az,Morrison:2014jha,Bousso:2012sj,Bousso:2012mh,Hubeny:2012wa,Wall:2012uf,Headrick:2014cta,Jafferis:2015del,Dong:2016eik,Cotler:2017erl} reveal how quantum information in a bulk region is encoded in the boundary theory, leading to natural error-correcting properties \cite{Almheiri_2015,Dong:2016eik,Kamal:2019skn,Kang:2019dfi,Akers:2021fut,Xu:2024xbz,Crann:2024gkv}.

The tensor network approach \cite{Swingle:2009bg,Swingle:2012wq,Qi:2013caa,Miyaji:2015fia,Czech:2015kbp,Harlow:2016vwg,Vasseur:2018gfy,Dong:2018seb,Hung:2019zsk,Bao:2019fpq} provides a concrete framework to explore the structure of the AdS/CFT correspondence \footnote{see \cite{Pastawski:2015qua,Hayden:2016cfa,Yang:2015uoa,Qi:2013caa,Bao:2019fpq,Cheng:2022ori,Dong:2023kyr} for an incomplete list of recent progress.}. More explicitly, we start with a space filled with tensors representing quantum states on each site in the space, then we contract the ``legs'' (indices) to connect adjacent tensors thus get a tensor network, which plays the role of the background geometry. The uncontracted legs of tensors in the bulk and the boundary are regarded as the bulk and the boundary degrees of freedom respectively, and the tensor network provides a quantum-error-correcting code that encodes the bulk states into the boundary Hilbert space \cite{Dong:2016eik,Hayden:2016cfa}. Many toy models of tensor network to simulate the AdS/CFT correspondence have been proposed (see also \cite{Chandra:2023dgq,Chen:2024unp,Geng:2025efs,Bao:2024ixc,Kaya:2025vof,Akella:2025owv,Akers:2021pvd,Akers:2022zxr,Akers:2024ixq,Akers:2023obn,Akers:2024wab,Akers:2024pgq} for recent interesting developments), based on different structures of the networks and different quantum states associated to the tensors. One important holographic feature for these toy models is that, the entanglement entropy is given by the minimal number of cuts along a homologous surface in the network, which matches the area of the minimal surface of the celebrated Ryu-Takayanagi (RT) formula \cite{Ryu:2006bv,Hubeny:2007xt,Casini_2011,Lewkowycz:2013nqa,Faulkner:2013ana,Engelhardt:2014gca}.

Nevertheless, in most of these toy models their structures of the tensor network are usually discrete and only capture the features of the background geometry in a qualitative way. There is a big gap between number of cuts in a discrete network and the area of a surface in a smooth Riemannian manifold. Note that the study of continuous tensor network (cTN) models is not new in the literature; see \cite{Verstraete:2010ft} for the initial proposal and \cite{Haegeman:2012uk,Jennings:2012bsr,Tilloy:2018gvo,Tilloy:2021yre} for subsequent developments. Relevant applications in high energy physics can be found in \cite{Shachar:2021vbu} and \cite{Vidal:2008zz,Haegeman:2011uy,Nozaki:2012zj}. In these models, the continuum nature is achieved by taking the infinitesimal limit of the lattice spacing in the underlying lattice configurations. Despite significant progress in approximating genuine continuous quantum field theories, the information about the background geometry is not fully encoded in the tensor network states constructed in this way, leaving the continuum simulation of such theories far from complete.

Recently, a continuous network as a perfect tessellation of the background AdS geometry was constructed using the so-called partial-entanglement-entropy (PEE) threads \cite{Lin:2023rxc,Lin:2024dho}, which are bulk geodesics with a particular density distribution determined by the PEE structure of the boundary CFT. In such a network, the number of cuts along a surface exactly reproduces the area of this surface. In this work, we will develop three toy models of tensor network based on the PEE threads (or the PEE tensor network for short) in $d$-dimensional Poincar\'e AdS space. The first one is the factorized PEE tensor network constructed via tensor product of EPR pairs. The second one is a HaPPY-like (Harlow-Pastawski-Preskill-Yoshida, \cite{Pastawski:2015qua}) tensor network constructed from perfect tensors. It functions as a quantum error-correcting code and exactly reproduces the RT formula for connected regions in the vacuum CFT state. In the third model, following \cite{Hayden:2016cfa,Vasseur:2018gfy,Jia:2020etj,Cheng:2022ori,Dong:2023kyr,Kaya:2025vof} we associate a random state to each bulk site to build the random PEE tensor network, where we can exactly reproduce the RT formula for generic boundary regions. 

The paper is organized as follows. In section \ref{sec:PEE network}, we give an introduction to the network of PEE threads. In section \ref{sec:setup}, we establish the geometric setup for our tensor network models. We introduce the factorized PEE tensor network in section \ref{sec:Factorized} and give an example for the HaPPY-like PEE tensor network in section \ref{sec:HaPPY}, and then introduce the random PEE tensor network in section \ref{sec:Random}. We end the paper with discussions in section \ref{sec:Discussions} and give a geometric description for the factorized PEE tensor network in appendix \ref{sec:RTinfactoried}.

\section{The network of PEE threads}\label{sec:PEE network}

The partial entanglement entropy (PEE) \cite{Wen:2018whg,Wen:2019iyq,Wen:2020ech,Han:2019scu,Han:2021ycp,Kudler-Flam:2019oru} is a measure of entanglement $\I(A,B)$ between any two non-overlapping regions $A$ and $B$ featured by the key property of additivity. It can be determined by the following physical requirements \cite{Wen:2019iyq,Vidal:2014aal}:
\begin{itemize}
	\item[1.] \textit{Additivity}: $\mathcal{I}\left(A,B\cup C\right)=\mathcal{I}\left(A,B\right)+\mathcal{I}\left(A,C\right)$;
	\item[2.] \textit{Normalization}: $\mathcal{I}\left(A,\bar{A}\right)=S_A$;
	\item[3.] \textit{Permutation symmetry}: $\mathcal{I}\left(A,B \right)=\mathcal{I}\left(B,A\right)$;
	\item[4.] \textit{Positivity}: $\mathcal{I}\left(A,B\right)>0$;
	\item[5.] \textit{Upper bounded}: $\mathcal{I}\left(A,B \right)\leq \min\left\{S_A,S_B\right\}$;
	\item[6.]\textit{$\mathcal{I}\left(A,B\right)$ should be invariant under local unitary transformations inside $A$ or $B$};
	\item[7.] \textit{Symmetry: For any symmetry transformation $\mathcal{T}$ under which $\mathcal{T}A=A'$ and $\mathcal{T}B=B'$, we have $\mathcal{I}\left(A,B\right)=\mathcal{I}\left(A',B'\right)$}.
\end{itemize}
Here the regions are all non-overlapping, $A\cup\bar{A}$ makes a pure state and $S_{A}$ denotes the entanglement entropy of $A$. The mutual information that satisfies the above requirements was first studied in \cite{Casini:2008wt} and is called the extensive mutual information (EMI). However, the EMI does not apply to known theories except for two-dimensional massless free fermions. Later, it was realized that if one abandons the assumption that the quantity represents a mutual information \cite{Wen:2019iyq} and imposes the normalization requirement only for spherical regions \cite{Lin:2024dho}, a unique solution of the above requirements exists for the vacuum state of holographic CFTs. This solution, i.e the PEE, is in general not a mutual information
\footnote{Although, in three‑ and higher‑dimensional vacuum CFTs, the PEE and the EMI share the same bilinear formula \eqref{twopointpee}, they are conceptually different. We list the main differences below:
		\begin{itemize}
			\item 	The EMI should be mutual information, which is additional to the seven requirements above. In this case the normalization property follows directly from mutual information and is therefore not an independent requirement. In contrast, the PEE is generally not a mutual information; the normalization property must be properly imposed as a requirement to define it.
			\item The EMI applies only to EMI models where mutual information is additive. So far, the two‑dimensional massless free fermion is the only justified EMI model. Moreover, it was later shown that such models do not exist in higher‑dimensional actual CFTs or in any of their limits (see \cite{Agon:2021zvp}). Most importantly, the EMI does not apply to holographic CFTs. On the other hand, the PEE can be defined for holographic CFTs, which is the foundation of this paper.
			\item In two‑dimensional quantum systems on a line, the PEE has a general formula: an additive linear combination of subsystem entanglement entropies. It is a special type of conditional mutual information, rather than a mutual information, and is therefore completely different from the EMI.
		\end{itemize}
		
		We have listed seven requirements: additivity, normalization, and five others. In fact, the bilinear form of both the EMI and the PEE is already determined by additivity together with the other five requirements, up to a constant coefficient. This bilinear structure exists in the vacuum state of a generic CFT. Interpreting it as the PEE is crucial for applying it to holographic CFTs and reproducing the RT formula.
		
		The concepts of PEE and EMI were developed independently. The study of EMI originated in the context of two‑dimensional free massless fermions, where the mutual information was found to be additive. It was natural to assume that similar theories with additive mutual information might exist in higher dimensions, and these were termed EMI models. Because the EMI is straightforward to compute and applicable to general entangling surfaces, it has been studied in a series of works (see, e.g., \cite{Bueno:2015rda,Casini:2015woa,Bueno:2019mex}) to compute entanglement entropy (EE) in EMI models using the normalization property. In particular, the first paper, \cite{Casini:2008wt}, ruled out holographic CFTs as EMI models, because the Ryu–Takayanagi (RT) formula is inconsistent with the additivity of mutual information. In summary, the applicability of EMI is extremely limited.
		
		The concept of PEE, on the other hand, was motivated by a natural slicing of the entanglement wedge in holography [66] and by the additive entanglement structure conjectured to describe the spatial distribution of entanglement—namely, the entanglement contour function proposed in \cite{Vidal:2014aal}. In one spatial dimension, a general proposal for the PEE \cite{Wen:2018whg,Wen:2020ech} exists, which defines the PEE as an additive linear combination of subsystem entanglement entropies; this was later referred to as the ALC proposal for PEE. In \cite{Wen:2019iyq}, it was shown that the ALC proposal satisfies all seven requirements (including normalization) for generic theories in one spatial dimension, and that it is not a mutual information. The ALC proposal for PEE is completely different from the EMI. They coincide with each other only in the special case of two‑dimensional massless free fermions, which is the only known EMI model.
		
		For higher‑dimensional theories, the bilinear form serves as an ideal candidate for the PEE, provided one abandons the assumption that the quantity represents a mutual information \cite{Wen:2019iyq}. Even so, the bilinear structure is fixed by additivity and the other five requirements up to a constant coefficient, if the normalization requirement is imposed for an arbitrary region, then no solution exists in general theories that satisfies all seven requirements. In our work, we are particularly interested in the PEE in holographic CFTs, and we have recently found that the coefficient can be consistently determined by imposing the normalization requirement only for spherical regions \cite{Lin:2024dho}, thereby ensuring the existence of a solution.
		
		In summary, abandoning the assumption that the bilinear structure describes the mutual information, together with the careful treatment of the normalization requirement, makes the bilinear structure a new quantum information measure (the PEE) in holographic CFT, which is conceptually different from the EMI. Furthermore, it was shown in \cite{Lin:2024dho} that the kinematic space and the Crofton formula for AdS space can be re‑derived based on the geometrization \cite{Lin:2023rxc} of this bilinear structure.
	
}. 
See also \cite{Vidal:2014aal,Wen:2018whg,Wen:2018mev,Wen:2019iyq,Kudler-Flam:2019oru,Wen:2024yny} for other prescriptions to construct the PEE which satisfy the above requirements, and see \cite{Basu:2023wmv,Rolph:2021nan,Lin:2022aqf,Lin:2023orb,Wen:2021qgx,Wen:2022jxr,Wen:2024uwr,Camargo:2022mme} for related discussions on PEE. 

Due to the properties of additivity and permutation symmetry, any PEE $\I (A,B)$ can be further decomposed into a class of PEE between two different sites $\v{x}$ and $\v{y}$, or the two-point PEE $\mathcal{I}\left(\v{x},\v{y}\right)$, in the following way \cite{Wen:2019iyq,Lin:2023rxc}:
\begin{equation}\label{eq:two-point PEE}
	\mathcal{I}\left(A,B\right)=\int_{A}d\v{x}\int_{B}d\v{y}\:\mathcal{I}\left(\v{x},\v{y}\right).
\end{equation}
Moreover, in the vacuum CFT$_{d}$ on a plane the two-point PEE can be fully determined by imposing all the above requirements for the spherical regions \cite{Lin:2023rxc,Lin:2024dho}, i.e.
\begin{equation}\label{twopointpee}
	\mathcal{I}\left(\v{x},\v{y}\right)=\frac{c}{6}\frac{2^{\left(d-1\right)}\left(d-1\right)}{\Omega_{d-2}\left|\v{x}-\v{y}\right|^{2\left(d-1\right)}},
\end{equation}
where $\Omega_{d-2}=2\pi^{\frac{d-1}{2}}/\Gamma\left(\frac{d-1}{2}\right)$ is the area of $\left(d-2\right)$-dimensional unit sphere. 

In the context of AdS/CFT, the vacuum CFT$_{d}$ state on a plane is dual to the Poincar\'e AdS$_{d+1}$. The two-point PEE  $\mathcal{I}\left(\v{x},\v{y}\right)$ was also proposed to have a natural holographic dual \cite{Lin:2023rxc,Lin:2024dho}, which is the density of the bulk geodesics connecting the boundary sites $\v{x}$ and $\v{y}$. The bulk geodesics whose density are determined in such a way are denoted as the PEE threads. These PEE threads can be described by vector fields, with the norm of each vector field capturing the density of threads. For instance, for each point on the AdS boundary, there exists a vector field that describes all the threads emanating from that point (see Fig.\ref{fig:pt-vx} for illustration and see \cite{Lin:2023rxc} for the explicit formula of the vector fields). The full PEE network is then formed by the superposition of all such vector fields (see an example of $d=2$ in Fig.\ref{fig:PEE network}). In this context, the number of geodesics are countless but one can compute the ``flux'' associated with these vector fields. For example, in \eqref{eq:two-point PEE} the right hand side captures the flux of threads connecting boundary regions $A$ and $B$. Additionally, one can compute the total crossing of the threads through a given bulk surface $\Sigma$. It is important to note that the PEE threads are unoriented curves; each time a thread intersects $\Sigma$, it contributes positively to the total crossing count. Therefore, when computing this quantity, which differs from the conventional net flux in that contributions from opposite directions add rather than cancel, the threads should be weighted by the number of times they intersect $\Sigma$. 

\begin{figure}
	\centering
	\subfloat[]{
		\begin{tikzpicture}[scale=0.5]
			\clip (-5,-2) rectangle (5,8);
			\draw[very thick] (-12,0) -- (12,0);
			\draw[] (0,-0.5) node {$\v{x}$};
			\draw[dashed,very thick] (0,0) to (0,8);
			\draw[dashed,thick] (0,0) arc (0:180:2);
			\draw[dashed,thick] (0,0) arc (0:180:3);
			\draw[dashed,thick] (0,0) arc (0:180:4);
			\draw[dashed,thick] (0,0) arc (180:0:3);
			\draw[dashed,thick] (0,0) arc (180:0:4);
			\draw[dashed,thick] (0,0) arc (180:0:2);
			\draw[dashed,thick] (0,0) arc (180:0:10);
			\draw[dashed,thick] (0,0) arc (0:180:10);
			\draw[dashed,thick] (0,0) arc (0:180:6);
			\draw[dashed,thick] (0,0) arc (180:0:6);
			\filldraw[black] (0,0) circle (2pt);
		\end{tikzpicture}
		\label{fig:pt-vx}
	}
	\qquad	\qquad	
	\subfloat[]{	
		\includegraphics[scale=0.15]{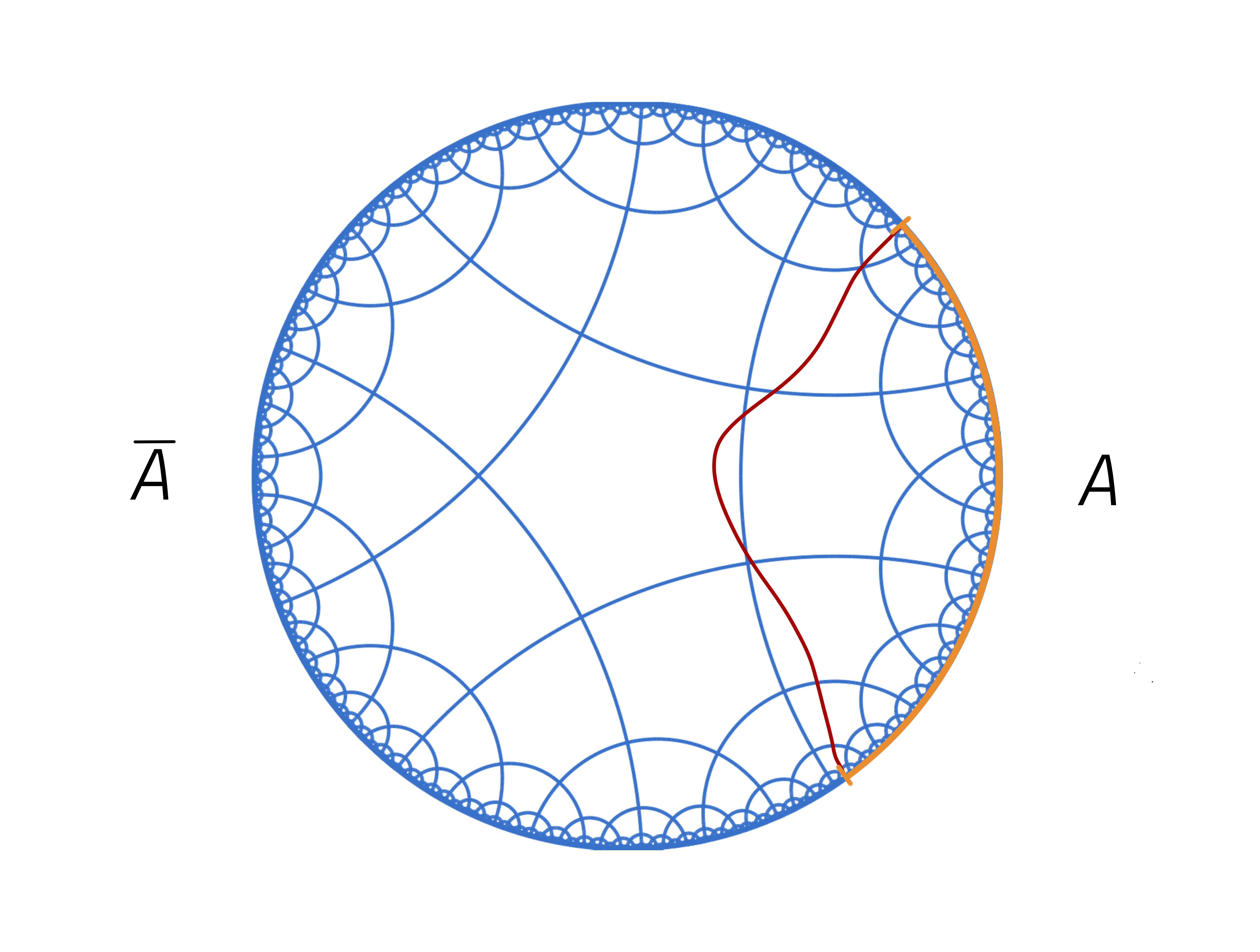}
		\label{fig:PEE network}	}	
	\caption{In the left figure, the black solid line is the AdS boundary, and the dashed lines are geodesics or PEE threads in AdS space. The distribution of the PEE threads emanating from a boundary site $\v{x}$ can be described by a vector field $V_{\v{x}}^{\mu}$, whose integral curves coincide with the PEE threads and whose norm represents the density of these threads. The right figure, taken from \cite{Wen:2024uwr}, shows the PEE network as the superposition of all vector fields $V_{\v{x}}^{\mu}$, resulting in a tessellation of a time slice of AdS$_3$. In the figure, the blue lines represent the PEE threads, and the red line is a surface $\Sigma_A$ homologous to the boundary region $A$.}	
\end{figure}

In discrete networks, concepts such as the number of legs or the number of intersections between a surface and the network are commonly used. We can adopt this language, hence techniques in the discrete models can be used, by defining how much each thread contributes to the flux, or the total crossing each time it passes through $\Sigma$. If we set the Hilbert space dimension on each leg to be $v$, then the contribution from each PEE thread to the flux is simply $\log v$. Here we use $\log v$ instead of $\log_2 v$ to adhere to the convention in high energy physics. Note that the flux or the total crossing, is the essential physical quantity, rather than the number of threads or intersections which depends on our choice of $v$. If we set $\log v=1$ for simplicity, then the flux exactly captures the number of the PEE threads, while the total crossing captures the number of intersections.

Remarkably, it was found that \cite{Lin:2024dho} the PEE network is a perfect tessellation of the static AdS space in the following sense:
\begin{itemize}
	\item \textit{Given an arbitrary $(d-1)$-dimensional surface $\Sigma$ in a time slice of AdS$_{d+1}$, the density of intersections between $\Sigma$ and the bulk PEE network everywhere on $\Sigma$ is given by a constant $\frac{1}{4G}$, where $G=\frac{3}{2c}$ is the Newton's constant}.
\end{itemize}
In the above statement, we used $\log v = 1$ to replace the total crossing with the number of intersections. But this is not essential: any other choice of $\log v$ merely rescales the number or the density of intersections by an overall factor $1/\log v$. Then for any bulk surface $\Sigma$, the area can be reproduced by counting the number $\mathcal{N}\left(\Sigma\right)$ of intersections  between $\Sigma$ and the PEE network,
\begin{equation}\label{PEEnum}
	\frac{\text{Area}(\Sigma)}{4G}=\mathcal{N}\left(\Sigma\right)= \frac{1}{2}\int_{\partial \mathcal{M}}d\v{x}\int_{\partial \mathcal{M}}d\v{y}\:\omega_{\Sigma}\left(\v{x},\v{y}\right)\mathcal{I}\left(\v{x},\v{y}\right).
\end{equation}
Here $\mathcal{M}$ is a constant-time slice in Poincaré AdS$_{d+1}$, and $\partial\mathcal{M}$ is the boundary where the vacuum CFT state resides. The quantity $\omega{_\Sigma}(\mathbf{x},\mathbf{y})$ denotes the number of times the geodesic connecting the boundary sites $\mathbf{x}$ and $\mathbf{y}$ intersects $\Sigma$.

Furthermore, it was shown in \cite{Lin:2024dho} that \eqref{PEEnum} is equivalent to the so-called Crofton formula in integral geometry \cite{Czech:2015qta} applied on a time slice of AdS$_{\text{d+1}}$\footnote{The Crofton formula was first introduced into the study of AdS/CFT by Czech et al. in \cite{Czech:2015qta}, where it was used to reconstruct the length of any bulk curve in a time slice of AdS$_3$ in terms of conditional mutual information. This provides the Crofton formula with a entropic interpretation based on the RT formula.}. To be specific, consider a $d$-dimensional Riemannian manifold $\mathcal{M}$ and an arbitrary $\left(d-1\right)$-dimensional immersed hypersurface $\Sigma$, the Crofton formula states that the area of $\Sigma$ can be evaluated by counting the number of intersections between $\Sigma$ and the geodesics in $\mathcal{M}$, which forms a space $\text{Geod}\left(M\right)$ with the invariant measure $d\Gamma$ which is called the Kinematic measure, i.e.
\begin{equation}\label{Crofton}
	\text{Area}\left(\Sigma\right)=\frac{1}{2}\frac{d-1}{\Omega_{d-2}} \int_{\text{Geod}\left(M\right)}\#\left(\Gamma\cap \Sigma\right)d\Gamma.
\end{equation}
Here, the integrand $\#\left(\Gamma\cap \Sigma\right)$ represents the number of intersections between any geodesic $\Gamma$ in $\text{Geod}\left(M\right)$ and $\Sigma$, and the Kinematic measure $d\Gamma$ specifies the density of geodesics in some parameter space. For the case of a time slice $\mathcal{M}$ in the AdS$_{d+1}$, all the geodesics can be parameterized by its boundary endpoints $\left(\v{x},\v{y}\right)$, and the Kinematic measure $d\Gamma$ is given by \cite{Czech:2016xec,Huang:2015xca,Lin:2024dho},
\begin{equation}\label{Kinematicmeasure}
	\begin{aligned}
		d\Gamma&=\text{det}\left(\frac{\partial^2 \ell\left(\v{x},\v{y}\right)}{\partial \v{x}\partial \v{y}}\right)d\v{x}\wedge d \v{y}=\frac{2^{d-1}}{\left|\v{x}-\v{y}\right|^{2d-2}}d\v{x}\wedge d \v{y},
	\end{aligned}
\end{equation}
where $\ell\left(\v{x},\v{y}\right)$ is the geodesic length of the geodesic connecting $\v{x}$ and $\v{y}$. Notably, the above Kinematic measure for AdS space is proportional to $\mathcal{I}\left(\v{x},\v{y}\right)$ given in \eqref{twopointpee}, hence we can check that \eqref{Crofton} exactly aligns with \eqref{PEEnum}. Note that for a generic manifold, the isometry group may be small, so there is no canonical “invariant” measure on the space of geodesics that is independent of location. The elegant formula \eqref{Crofton} applies only to highly symmetric manifolds such as AdS space, Euclidean space, or the sphere; its generalization to more generic manifolds is highly nontrivial.

One profound implication of the above prescription is that, minimizing the area of a surface $\Sigma_A$ homologous to any boundary region $A$ is equivalent to minimizing the number of intersections between $\Sigma_A$ (see the red line in Fig.\ref{fig:PEE network}) and the PEE network. This further implies that we can exactly reproduce the Ryu-Takayanagi (RT) formula for holographic entanglement entropy by developing proper tensor network models where the entanglement entropy of a boundary region is captured by the minimal number of intersections between a homologous surface and the PEE network \cite{Lin:2023rxc,Lin:2024dho,Wen:2024uwr}. Briefly speaking, the RT formula for the boundary region $A$ can be reformulated as follows,
\begin{equation}\label{eq:entropy from PEE thread flow}
	S_A =\min _{\Sigma_A}\mathcal{N}(\Sigma_A)
	=\min _{\Sigma_A} \frac{\operatorname{Area}\left(\Sigma_A\right)}{4 G}
	=\frac{\operatorname{Area}\left(\gamma_A\right)}{4 G},
\end{equation}
where $\gamma_{A}$ is the famous RT surface.

\section{Geometric setup for bulk tensors}\label{sec:setup}

In a model of tensor network, a vertex at site $\v{z}$ with a set of legs corresponds to a quantum state represented by a local tensor $\T_{u_1u_2\cdots} (\v{z})$, where the index $u_i$ characterizes the leg that is associated to a Hilbert space $\mathcal{H}_{i}$ spanned by the basis vectors $\{\ket{u_i}_{\v{z}}\}$. Then the quantum state in the Hilbert space $\bigotimes_i \mathcal{H}_{i}$ settled at $\v{z}$ is defined by (assuming Einstein summation hereafter),
\begin{equation}\label{vertexstate1}
	\ket{\T(\v{z})} \equiv \T_{u_1u_2\cdots} (\v{z}) \ket{u_1}_{\v{z}}\ket{u_2}_{\v{z}}\cdots.
\end{equation}
For simplicity, we set each of $\{\mathcal{H}_{i}\}$ to be the Hilbert space of a $v$-dimensional qudit, hence $u_i={1,2,\cdots, v}$.

\begin{figure}[h]
	\centering
	{\begin{tikzpicture}[x=0.75pt,y=0.75pt,yscale=-1.2,xscale=1.2]
			
			\path  [fill={rgb, 255:red, 74; green, 144; blue, 226 }  ,fill opacity=0.6] (410.5,231.17) .. controls (417.5,198.17) and (487.5,190.17) .. (501.5,219.17) .. controls (515.5,248.17) and (521,278.67) .. (496.5,298.17) .. controls (495.22,299.19) and (493.86,300.2) .. (492.43,301.2) .. controls (466.31,319.37) and (415.46,333.61) .. (396.5,305.17) .. controls (376.5,275.17) and (403.5,264.17) .. (410.5,231.17) -- cycle ;
			\draw[blue!60,thick]   (471.2,312.8) .. controls (462.6,254.87) and (434.2,216.47) .. (410.5,231.17) ;
			\filldraw[black]  (471.2,312.8) circle (0.5pt);
			\filldraw[black]  (410.5,231.17) circle (0.5pt);
			\draw[blue!60,thick]    (396.5,305.17) .. controls (436.2,221.27) and (503.4,246.07) .. (511,277.67) ;
			\filldraw[black]  (511,277.67) circle (0.5pt);
			\draw[blue!60,thick]    (424.33,320.33) .. controls (412.33,273) and (455,244.33) .. (496.33,213) ;
			\filldraw[black]  (424.5,321) circle (0.5pt);
			\filldraw[black]  (497,212.5) circle (0.5pt);
			\draw  [fill={rgb, 255:red, 255; green, 255; blue, 255 }  ,fill opacity=1 ] (441.2,250.77) .. controls (441.2,245.52) and (445.45,241.27) .. (450.7,241.27) .. controls (455.95,241.27) and (460.2,245.52) .. (460.2,250.77) .. controls (460.2,256.02) and (455.95,260.27) .. (450.7,260.27) .. controls (445.45,260.27) and (441.2,256.02) .. (441.2,250.77) -- cycle ;
			\draw[fill=white] (471.2,312.8) circle[radius=3pt];
			\draw[fill=white] (410.5,231.17) circle[radius=3pt];
			\draw[fill=white] (396.5,305.17) circle[radius=3pt];
			\draw[fill=white] (511,277.67) circle[radius=3pt];
			\draw[fill=white] (424.5,321) circle[radius=3pt];
			\draw[fill=white] (497,212.5) circle[radius=6pt];
			
			\draw[fill=black] (470,233) circle[radius=1.5pt];
			\draw[fill=black] (473,230.6) circle[radius=1.5pt];
			
			\draw[] (451,251) node {$\v{z}$};
			\draw[] (497,212.5) node {$\v{z'}$};
			\draw[] (470,220.5) node {$\ket{\v{z}\v{z'}}$};
			
		\end{tikzpicture}
		
		\caption{A tensor state located at the bulk site $\v{z}$ (hollow white dot in the center). The legs connecting $\v{z}$ is actually continuously distributed, which can be described by a vector field as in \cite{Lin:2023rxc}. Only six legs are exhibited, while other legs are collectively represented by the blue region. The contraction between tensors at $\v{z}$ and at adjacent site $\v{z'}$ is realized by projecting the total state into a maximally entangled state $\ket{\v{z}\v{z'}}$ along the PEE thread in between.}
		\label{fig:bulk tensor}}
\end{figure}

Now we develop the PEE tensor networks. Consider any site $\v{z}$ in a time slice $\M$ of the bulk space, for any PEE thread emanating from it to the boundary, we assign one leg that lies along this PEE thread, hence build a one-to-one correspondence between the PEE threads and the legs connecting to $\v{z}$ (see Fig.\ref{fig:bulk tensor}). In the case of Poincar\'e AdS, any PEE thread emanating from $\v{z}$ intersects the boundary with a unique boundary site $\v{x}$. We denote this thread as $C_{\v{z}}(\v{x})$, and $\v{x}$ can be used to parameterize all the threads or legs associated to $\v{z}$. Let us denote the Hilbert space and basis vectors on each leg as $\mathcal{H}_\v{x}$ and $\ket{k_{\v{x}}}_{\v{z}}$, such that the state \eqref{vertexstate1} is rewritten as 
\begin{equation}\label{eq:bulk local tensor 2}
		\ket{\T(\v{z})}\equiv \T_{\v{k}}(\v{z})\left(\bigotimes_{\v{x}\in bdy}\ket{k_{\v{x}}}_{\v{z}}\right),
\end{equation}
where $\v{k}$ denotes a collection of indices $\{k_{\v{x}}\}$, and $\{\ket{k_{\v{x}}}\}$ forms an orthonormal basis for the Hilbert space associated to the index $k_{\v{x}}$.

Given any adjacent site $\v{z'}$ of $\v{z}$, there is a unique PEE thread connecting them, which picks up a pair of legs associated to $\v{z}$ and $\v{z'}$ that lie within this thread (see Fig.\ref{fig:bulk tensor}). Then we contract all such pairs of legs to glue adjacent sites along the PEE threads, and finally get a quantum state $\ket{\Psi}$ for the boundary open (uncontracted) legs,
\begin{equation}\label{contraction0}
	\ket{\Psi}=	\left(\bigotimes_{\v{z},\v{z'} \in \mathcal{M}} \bra{\v{z}\v{z'}}\right)\left(\bigotimes_{\v{z}\in \mathcal{M}} \ket{\T(\v{z})}\right)
\end{equation}
where $ \ket{\v{z}\v{z'}}=\frac{1}{\sqrt{v}}\sum_{k=1}^{k=v}\ket{k}_{\v{z}}\ket{k}_{\v{z'}}$ is a maximally entangled state defined on any pair of legs connecting the nearby sites $\v{z}$ and $\v{z'}$. Such construction is also referred to as
	projected entangled pair states (PEPS)\cite{Verstraete:2003wvw,Verstraete:2004cf,Verstraete:2008cex,Cirac:2020obd} in condensed matter physics. This projection can be understood as the gluing of nearby geometries through the imposition of identical boundary conditions when performing the path integral in gravitational theories. These contractions connect all the legs into complete PEE threads (see Fig.\ref{fig:tensor at a PEE thread}) and eventually we get the PEE tensor networks.

\begin{figure}[h]
	\centering
	\begin{tikzpicture}[x=0.75pt,y=0.75pt,yscale=-2.3,xscale=2.3]
		
		\draw [blue!50,thick]
		(300,150) .. controls (300,150) and (300,150) .. (300,150)
		.. controls (300,122.39) and (277.61,100) .. (250,100)
		.. controls (222.39,100) and (200,122.39) .. (200,150);
		\draw[thick]
		(320,148.5)--(180,148.5);
		\draw[thick,red]
		(200,150) arc (180:190:50);
		\draw[thick,red]
		(300,150) arc (360:350:50);
		\draw  [draw opacity=0] (202.08,115.06) .. controls (206.92,118.71) and (210.61,123.8) .. (212.53,129.69) -- (184,139) -- cycle ; \draw [blue!50,thick] (202.08,115.06) .. controls (206.92,118.71) and (210.61,123.8) .. (212.53,129.69) ;  
		\draw  [draw opacity=0] (199.72,120.02) .. controls (207.15,120.45) and (213.88,123.53) .. (218.97,128.33) -- (197.91,150.68) -- cycle ; \draw  [blue!50,thick] (199.72,120.02) .. controls (207.15,120.45) and (213.88,123.53) .. (218.97,128.33) ;  
		\draw  [draw opacity=0] (225.85,113.12) .. controls (229.66,106.06) and (234.56,99.67) .. (240.32,94.17) -- (292.7,149.1) -- cycle ; \draw  [blue!50,thick] (225.85,113.12) .. controls (229.66,106.06) and (234.56,99.67) .. (240.32,94.17) ;  
		\draw  [draw opacity=0] (222.02,100.57) .. controls (229.22,101.68) and (235.79,104.68) .. (241.21,109.04) -- (215.82,140.61) -- cycle ; \draw  [blue!50,thick] (222.02,100.57) .. controls (229.22,101.68) and (235.79,104.68) .. (241.21,109.04) ;  
		\draw  [draw opacity=0] (259.06,108.61) .. controls (263.9,104.07) and (269.82,100.68) .. (276.38,98.86) -- (287.6,139.09) -- cycle ; \draw [blue!50,thick]  (259.06,108.61) .. controls (263.9,104.07) and (269.82,100.68) .. (276.38,98.86) ;  
		\draw  [draw opacity=0] (258.51,95.04) .. controls (265.03,99.05) and (269.89,105.49) .. (271.85,113.08) -- (242.8,120.6) -- cycle ; \draw  [blue!50,thick] (258.51,95.04) .. controls (265.03,99.05) and (269.89,105.49) .. (271.85,113.08) ;  
		\draw  [draw opacity=0] (283.5,128.29) .. controls (288.37,123.18) and (295.03,119.79) .. (302.47,119.12) -- (305.2,149) -- cycle ; \draw [blue!50,thick]  (283.5,128.29) .. controls (288.37,123.18) and (295.03,119.79) .. (302.47,119.12) ;  
		\draw  [draw opacity=0] (288.03,131.35) .. controls (289.77,125.21) and (292.55,119.52) .. (296.17,114.46) -- (341.03,146.45) -- cycle ; \draw [blue!50,thick] (288.03,131.35) .. controls (289.77,125.21) and (292.55,119.52) .. (296.17,114.46) ;  
		\draw [draw opacity=0] (194.58,146.02) .. controls (195.61,144.91) and (197.08,144.22) .. (198.72,144.22) .. controls (201.85,144.22) and (204.38,146.76) .. (204.38,149.89) .. controls (204.38,149.93) and (204.38,149.98) .. (204.38,150.02) -- (198.72,149.89) -- cycle ; \draw [blue!50,thick]  (194.58,146.02) .. controls (195.61,144.91) and (197.08,144.22) .. (198.72,144.22);
		\draw  [draw opacity=0] (202.78,145.94) .. controls (203.77,146.96) and (204.38,148.36) .. (204.38,149.89) .. controls (204.38,149.93) and (204.38,149.98) .. (204.38,150.02) -- (198.72,149.89) -- cycle ; \draw  [red,draw opacity=1 ] (202.78,145.94) .. controls (203.77,146.96) and (204.38,148.36) .. (204.38,149.89) .. controls (204.38,149.93) and (204.38,149.98) .. (204.38,150.02) ; 
		\draw  [draw opacity=0] (196.82,150.23) .. controls (197.11,145.6) and (200.96,141.95) .. (205.65,141.97) .. controls (205.73,141.97) and (205.82,141.97) .. (205.9,141.98) -- (205.61,150.79) -- cycle ; \draw [blue!50,thick]  (196.82,150.23) .. controls (197.11,145.6) and (200.96,141.95) .. (205.65,141.97) .. controls (205.73,141.97) and (205.82,141.97) .. (205.9,141.98) ;  
		\draw  [red,draw opacity=1 ] (196.82,150.23) .. controls (196.9,148.88) and (197.28,147.63) .. (197.9,146.52) ;  
		\filldraw[black] (196.82,150) circle (0.6pt); 
		\draw  [draw opacity=0] (296.89,150) .. controls (296.89,149.95) and (296.89,149.9) .. (296.89,149.85) .. controls (296.88,146.27) and (299.78,143.35) .. (303.37,143.33) .. controls (304.33,143.33) and (305.24,143.53) .. (306.06,143.9) -- (303.39,149.83) -- cycle ; \draw [blue!50,thick]  (296.89,150) .. controls (296.89,149.95) and (296.89,149.9) .. (296.89,149.85) .. controls (296.88,146.27) and (299.78,143.35) .. (303.37,143.33) .. controls (304.33,143.33) and (305.24,143.53) .. (306.06,143.9) ;  
		\draw  [red,draw opacity=1 ] (296.89,150) .. controls (296.89,149.95) and (296.89,149.9) .. (296.89,149.85) .. controls (296.88,148.65) and (297.21,147.51) .. (297.78,146.54) ; 
		\filldraw[black] (296.89,150) circle (0.6pt);
		\draw  [draw opacity=0] (293.77,143) .. controls (294.13,142.94) and (294.51,142.91) .. (294.9,142.91) .. controls (299,142.9) and (302.34,146.22) .. (302.36,150.33) .. controls (302.36,150.34) and (302.36,150.34) .. (302.36,150.35) -- (294.91,150.35) -- cycle ; \draw [blue!50,thick]  (293.77,143) .. controls (294.13,142.94) and (294.51,142.91) .. (294.9,142.91) .. controls (299,142.9) and (302.34,146.22) .. (302.36,150.33) .. controls (302.36,150.34) and (302.36,150.34) .. (302.36,150.35) ;  
		\draw  [red,draw opacity=1 ] (301.46,146.8) .. controls (302.03,147.85) and (302.35,149.05) .. (302.36,150.33) .. controls (302.36,150.34) and (302.36,150.34) .. (302.36,150.35) ;  
		\draw  [fill={rgb, 255:red, 255; green, 255; blue, 255 }  ,fill opacity=1 ] (205.56,122.03) .. controls (205.56,120.23) and (207.01,118.78) .. (208.81,118.78) .. controls (210.6,118.78) and (212.06,120.23) .. (212.06,122.03) .. controls (212.06,123.82) and (210.6,125.28) .. (208.81,125.28) .. controls (207.01,125.28) and (205.56,123.82) .. (205.56,122.03) -- cycle ;
		\draw   [fill={rgb, 255:red, 255; green, 255; blue, 255 }  ,fill opacity=1 ](263.33,103.14) .. controls (263.33,101.34) and (264.79,99.89) .. (266.58,99.89) .. controls (268.38,99.89) and (269.83,101.34) .. (269.83,103.14) .. controls (269.83,104.93) and (268.38,106.39) .. (266.58,106.39) .. controls (264.79,106.39) and (263.33,104.93) .. (263.33,103.14) -- cycle ;
		\draw  [fill={rgb, 255:red, 255; green, 255; blue, 255 }  ,fill opacity=1 ] (228.44,103.81) .. controls (228.44,102.01) and (229.9,100.56) .. (231.69,100.56) .. controls (233.49,100.56) and (234.94,102.01) .. (234.94,103.81) .. controls (234.94,105.6) and (233.49,107.06) .. (231.69,107.06) .. controls (229.9,107.06) and (228.44,105.6) .. (228.44,103.81) -- cycle ;
		\draw  [fill={rgb, 255:red, 255; green, 255; blue, 255 }  ,fill opacity=1 ] (288.44,122.47) .. controls (288.44,120.68) and (289.9,119.22) .. (291.69,119.22) .. controls (293.49,119.22) and (294.94,120.68) .. (294.94,122.47) .. controls (294.94,124.27) and (293.49,125.72) .. (291.69,125.72) .. controls (289.9,125.72) and (288.44,124.27) .. (288.44,122.47) -- cycle ;
		\draw [fill={rgb, 255:red, 255; green, 255; blue, 255 }  ,fill opacity=1 ]  (196.89,144.25) .. controls (196.89,142.46) and (198.34,141) .. (200.14,141) .. controls (201.93,141) and (203.39,142.46) .. (203.39,144.25) .. controls (203.39,146.04) and (201.93,147.5) .. (200.14,147.5) .. controls (198.34,147.5) and (196.89,146.04) .. (196.89,144.25) -- cycle ;
		\draw [fill={rgb, 255:red, 255; green, 255; blue, 255 }  ,fill opacity=1 ]  (296.44,144.03) .. controls (296.44,142.23) and (297.9,140.78) .. (299.69,140.78) .. controls (301.49,140.78) and (302.94,142.23) .. (302.94,144.03) .. controls (302.94,145.82) and (301.49,147.28) .. (299.69,147.28) .. controls (297.9,147.28) and (296.44,145.82) .. (296.44,144.03) -- cycle ;
		\filldraw[black] (300,150) circle (0.6pt);
		\filldraw[black] (200,150) circle (0.6pt);
		\filldraw[black] (302.36,150) circle (0.6pt);
		\filldraw[black] (204.38,150) circle (0.6pt);
		\draw[] (200.4,144) node {$\v{x}$};
		\draw[] (300,144.5) node {$\v{y}$};
		\draw[] (200.4,155) node {$\uparrow$};
		\draw[] (300,155) node {$\uparrow$};
		\draw[] (200.4,162) node {$\ket{a}_{\v{x}}$};
		\draw[] (300,162) node {$\ket{b}_{\v{y}}$};
		\draw[] (300,110) node {$\sim T_{ab}(\v{x},\v{y})$};
		\draw[] (209,122.5) node {$\v{z}_1$};
		\draw[] (292,122.5) node {$\v{z}_n$};
		\draw[] (232,104.5) node {$\v{z}_2$};
		\draw[] (269,104) node {$\v{z}_{n-1}$};
		\draw[] (250,104) node {$\cdots$};
	\end{tikzpicture}	
	\caption{ Contraction of bulk tensors along a PEE thread results in a PEE thread with two open legs of the boundary sites tangent to the thread. In the factorized PEE tensor network, the thread represents a two-qudit state $T_{ab}(\v{x},\v{y})\ket{a}_{\v{x}}\ket{b}_{\v{y}}$. }
	\label{fig:tensor at a PEE thread}
\end{figure}

\section{Factorized PEE tensor network}\label{sec:Factorized}

Let us consider a pair of PEE threads which makes a complete geodesic passing through $\v{z}$, and we denote it as $C_{\v{z}}(\v{x},\v{x'})\equiv C_{\v{z}}(\v{x})\cup C_{\v{z}}(\v{x'})$. Correspondingly we can divide the legs of $\v{z}$ into pairs and rewrite the quantum state \eqref{eq:bulk local tensor 2} as, 
\begin{equation}\label{eq:bulk local tensor 3}
		\ket{\T(\v{z})}=\T_{\v{a}\v{b}}(\v{z})\left(\bigotimes_{\C_{\v{z}}(\v{x},\v{x'})}\ket{a_{\v{x}}}_{\v{z}}\ket{b_{\v{x'}}}_{\v{z}}\right),
	\end{equation}
	where $\v{a}$ ($\v{b}$) denotes a collection of indices $\{a_{\v{x}}\}$ ($\{b_{\v{x'}}\}$). We define the factorized PEE tensor network in the following:
\begin{itemize}
	\item Firstly, we set the tensor $\T_{\v{a}\v{b}}(\v{z})$ to be a tensor product of two-index tensors, each of which is associated to a pair of legs lying within the same complete PEE thread, such that the quantum state \eqref{eq:bulk local tensor 2} becomes a product of states associated to each pair of legs:
	\begin{equation}\label{eq:bulk local tensor}
			\ket{\T(\v{z})}=\bigotimes_{\C_{\v{z}}(\v{x},\v{x'})}\T_{ab}(\v{z})\ket{a_{\v{x}}}_{\v{z}}\ket{b_{\v{x'}}}_{\v{z}}\,.
	\end{equation}
	
	\item Furthermore, we require the state $\T_{ab}(\v{z})\ket{a_{\v{x}}}_{\v{z}}\ket{b_{\v{x'}}}_{\v{z}}$ to be a pair of maximally entangled qudits, i.e. EPR pairs.
\end{itemize} 

Then we contract all the internal legs to get the tensor network. Note that only the legs of adjacent sites are contracted along the PEE thread in between, hence the contractions are always along the PEE threads. More explicitly, let us consider a PEE thread $\C(\v{x},\v{y})$ connecting boundary sites $\v{x}$ and $\v{y}$, and write $\C(\v{x},\v{y})=\{\v{x},\v{z}_1,\dots,\v{z}_n, \v{y}\}$ as a set of consecutive adjacent bulk sites with $n\rightarrow\infty$, we contract all the two-index tensors on $\C(\v{x},\v{y})$ to get,
\begin{equation}\label{eq:tensor on PEE thread}
	T_{ab}(\v{x},\v{y})\equiv\T_{ac_1}(\v{x})\T_{c_1c_2}(\v{z}_1)\dots \T_{c_{n}c_{n+1}}(\v{z}_{n})\T_{c_{n+1}b}(\v{y}),
\end{equation}
which is also a two-index tensor representing a pair of maximally entangled qudits. The contraction gives us a complete PEE thread with the two open legs tangent to $\C(\v{x},\v{y})$ in the EPR state $\ket{\psi}_{\v{x}\v{y}}=T_{ab}(\v{x},\v{y})\ket{a}_{\v{x}}\ket{b}_{\v{y}}$, (see Fig.\ref{fig:tensor at a PEE thread}). Note that, since $\v{x}$ and $\v{y}$ are boundary sites, we cannot use them to parameterize their legs since by default there are numerous qudits with open leg lying in the same boundary site. Just keep in mind that, here $\ket{a}_{\v{x}}\ket{b}_{\v{y}}$ only denotes the state of the pair of open legs that is tangent to the corresponding PEE thread $\mathcal{C}(\v{x},\v{y})$ respectively. After the contraction along all the PEE threads, we get a tensor network representing the quantum state for all the boundary open legs:
\begin{align}\label{boundarystate}
	\ket{\Psi}=\bigotimes_{_{\C(\v{x},\v{y})}} T_{ab}(\v{x},\v{y})\ket{a}_{\v{x}}\ket{b}_{\v{y}}\,,
\end{align}
which is a product state of EPR pairs. The reduced density matrix for any one of the qudits in the EPR pairs satisfies:
\begin{align}\label{eq:unitary tensor}
	\rho_{\v{x}}=&T_{ab}(\v{x},\v{y})T^{*}_{cb}(\v{x},\v{y})\left(\ket{a}\bra{c}\right)_{\v{x}}=\frac{\id_{\v{x}}}{v},
\end{align}
which demonstrates the unitarity of $T_{ab}(\v{x},\v{y})$.

Now we consider any boundary region $A$ in the state \eqref{boundarystate}, and compute the entanglement entropy for $A$. It is convenient to classify all the PEE threads into three classes: (a) $\mathcal{C}^{A}$: threads that anchor only on $A$; (b) $\mathcal{C}^{\bar{A}}$: threads that anchor only on $\bar{A}$; (c) $\mathcal{C}^{A\bar{A}}$: threads that anchor on both $A$ and $\bar{A}$ at each end. See the Fig.\ref{fig:various PEE threads} for illustration. Let us rewrite $\ket{\Psi}=\ket{T^{A}}\otimes\ket{T^{\bar{A}}}\otimes\ket{T^{A\bar{A}}}$ where
\begin{equation}
	\begin{aligned}
		\ket{T^{A}}&\equiv \bigotimes_{\mathcal{C}(\v{x},\v{y})\in \mathcal{C}^A} T^{(A)}_{ab}(\v{x},\v{y})\ket{a}_{\v{x}}\ket{b}_{\v{y}},
		\\
		\ket{T^{\bar{A}}}&\equiv\bigotimes_{\mathcal{C}(\v{x},\v{y})\in \mathcal{C}^{\bar{A}}} T^{(\bar{A})}_{ab}(\v{x},\v{y})\ket{a}_{\v{x}}\ket{b}_{\v{y}},
		\\
		\ket{T^{A\bar{A}}}&\equiv\bigotimes_{\mathcal{C}(\v{x},\v{y})\in \mathcal{C}^{A\bar{A}}} T^{(A\bar{A})}_{ab}(\v{x},\v{y})\ket{a}_{\v{x}}\ket{b}_{\v{y}}.
	\end{aligned}
\end{equation}

Then we define pure states $\rho^{(A)}=\ketbra{T^{A}}{T^{A}},~\rho^{(\bar{A})}=\ketbra{T^{\bar{A}}}{T^{\bar{A}}},~\rho^{(A\bar{A})}=\ketbra{T^{A\bar{A}}}{T^{A\bar{A}}}$, and we have $\rho=\ketbra{\Psi}{\Psi}=\rho^{(A)}\otimes \rho^{(\bar{A})}\otimes\rho^{(A\bar{A})}$,
which implies that
\begin{equation}
	\rho_{A}=\Tr_{\bar{A}}\,\rho=\rho^{(A)}_{A}\otimes \rho^{(\bar{A})}_{A}\otimes\rho^{(A\bar{A})}_{A}\,.
\end{equation}
Note that $\rho^{(A)}_{A}=\rho^{(A)}=\ketbra{T^{A}}{T^{A}}$ is pure, and we also have $\rho^{(\bar{A})}_{A}=\Tr_{\bar{A}}\rho^{(\bar{A})}=\Tr_{\bar{A}}\ketbra{T^{\bar{A}}}{T^{\bar{A}}}=\braket{T^{\bar{A}}}{T^{\bar{A}}}=1$, hence $S(\rho^{(A)}_{A})=S(\rho^{(\bar{A})}_{A})=0$. As for $\rho^{(A\bar{A})}_{A}$, according to the unitarity of $T_{ab}(\v{x},\v{y})$ in \eqref{eq:unitary tensor}, we have
\begin{align}
	\rho^{(A\bar{A})}_{A}=&\bigotimes_{\mathcal{C}(\v{x},\v{y})\in \mathcal{C}^{A\bar{A}}}\left(T(\v{x},\v{y})T^\dagger(\v{x},\v{y})\right)
	\cr
	=&\bigotimes_{\mathcal{C}(\v{x},\v{y})\in \mathcal{C}^{A\bar{A}}}\frac{\id_{\v{x}}}{v}.
\end{align}
In order to match the case of AdS/CFT, the total flux of PEE threads connecting $A$ and $\bar{A}$ should match to $\mathcal{I}(A,\bar{A})$. Again, let us set $\log v=1$, then the number of PEE threads in $\mathcal{C}^{A\bar{A}}$ coincides with $\I(A,\bar{A})$, and the entanglement entropy between $A$ and $\bar{A}$ is given by:
\begin{equation}\label{eq:entropy factorized PEE}
	S_{A}=S(\rho^{(A\bar{A})}_{A})=\I(A,\bar{A}).
\end{equation}

It has been demonstrated that \cite{Lin:2024dho}, given a static boundary interval $A$ (or any spherical region in higher dimensions) and the corresponding RT surface $\gamma_{A}$, the PEE threads in $\mathcal{C}^{A\bar{A}}$ are the only threads that intersect with $\gamma_{A}$, and each thread only intersects once. This means the number of intersections between $\gamma_{A}$ and the PEE network is exactly $\mathcal{I}(A,\bar{A})$. Furthermore, since the area of $\gamma_{A}$ is computed by counting the number of intersections \eqref{PEEnum}, we have
\begin{align}
	S_{A}=\I(A,\bar{A})=\frac{\text{Area}(\gamma_{A})}{4G}\,,
\end{align}
which reproduces the RT formula for any static boundary single intervals and spherical regions.  See the appendix \ref{sec:RTinfactoried} for more details for the bulk geometric picture, and another derivation for the RT formula by constructing isometric mappings from $\gamma_{A}$ to the boundary.

Nevertheless, for disconnected and non-spherical connected boundary regions, the homologous surfaces $\Sigma_A$ that only intersect with the threads in $\mathcal{C}^{A\bar{A}}$ no longer exists. Furthermore, the RT surface $\gamma_{A}$ intersects not only with all the threads in $\mathcal{C}^{A\bar{A}}$, but also with a subset of the threads in $\mathcal{C}^{A}$ or $\mathcal{C}^{\bar{A}}$ \cite{Lin:2024dho}, which means $\I(A,\bar{A})<\frac{Area(\gamma_{A})}{4G}$. For these reasons we fail to reproduce the RT formula for generic boundary regions in this model.

\section{HaPPY-like PEE tensor network}\label{sec:HaPPY}

Besides regarding the tensors as tensor product of two-index tensors, we can also construct a HaPPY-like PEE tensor network following \cite{Pastawski:2015qua}, where we associate a perfect tensor on each site in the PEE network. Firstly, let us recall the definition of perfect tensor as follows:
\begin{defn}
	A $2n$-index tensor $T_{a_1a_2\dots a_{2n}}$
	is a perfect tensor if, for any bipartition
	of its legs into a set $D$ and complementary set $D^c$ with $\abs{D}\leq\abs{D^c}$, $T$ is proportional
	to an isometric tensor from $D$ to $D^c$, where $\abs{\cdot}$ denotes the number of legs in this set.
\end{defn}
The definition implies that if we assign a flow configuration to a perfect tensor, the tensor always represents an isometric map as long as the number of incoming legs is less than or equal to the number of outgoing legs.

\begin{figure}[hbtp]
	\subfloat[]{
		\includegraphics[width=0.3\linewidth]{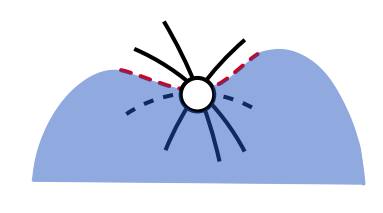}
		\label{fig:case 1}
	}
	\hfill
	\subfloat[]{
		\includegraphics[width=0.32\linewidth]{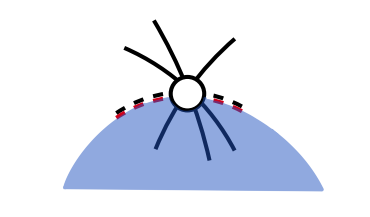}
		\label{fig:case 2}
	}
	\hfill
	\subfloat[]{
		\includegraphics[width=0.23\linewidth]{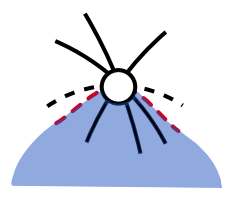}
		\label{fig:case 3}
	}
	\caption{Three local configurations for the legs emanating from one bulk site (the hollow circle) on the homologous surface $\Sigma_A$ (red dashed line) in the PEE network. Here, $\mathcal{W}_{\Sigma_A}$ is marked as the blue region, and the PEE thread tangent to $\Sigma_A$ at this bulk site is marked as the black dashed line.}
	\label{fig:3 cases}
\end{figure}

Let us start from a generic tensor network with no open legs in the bulk, and consider a connected boundary region $A$. Furthermore, let us cut the tensor network open along a homologous surface $\mathcal{C}$ whose boundary matches the boundary of $A$, such that the tensor network is divided into two parts $\P$ and $\Q$, which are maps from $\mathcal{C}$ to $A$ and from $\mathcal{C}$ to $\bar{A}$ respectively. The boundary state can be written as
	\begin{equation}\label{eq:holo state}
		\ket{\Psi}=\P_{\v{a}\v{c}}\Q_{\v{b}\v{c}}\ket{\v{a}}_{A}\ket{\v{b}}_{\bar{A}}=\ket{\P_\v{c}}_{A}\ket{\Q_{\v{c}}}_{\bar{A}},
	\end{equation}
	where $\{\ket{\v{a}}\},~\{\ket{\v{b}}\}$ runs over the complete bases for $A$ and $\bar{A}$, and the index $\v{c}$ collects degrees of freedom associated to $\mathcal{C}$. Tracing $\bar{A}$ we get the reduced density matrix,
	\begin{equation}
		\rho_A=\braket{\Q_\v{c'}}{\Q_{\v{c}}}\ket{\P_\v{c}}\bra{\P_\v{c'}},
	\end{equation}
	whose corresponding entanglement entropy $S_A$ is upper bounded by the log of the Hilbert space dimension of $\mathcal{C}$, i.e. $\mathcal{N}(\mathcal{C}) \log v$ where $\mathcal{N}(\mathcal{C})$ is the number of intersections between $\mathcal{C}$ and the network. Then we get the minimal upper bound by choosing the homologous surface with minimal number of cuts, which we also denote as $\gamma_{A}$, hence
	\begin{equation}\label{entropybound}
		S_A\leq \mathcal{N}(\gamma_A) \log v.
	\end{equation}
	As demonstrated in \cite{Pastawski:2015qua}, the above inequality is saturated if and only if the two tensors $\P$ and $\Q$, obtained by cutting the entire bulk tensor along $\gamma_{A}$, are isometric. We now apply this discussion to our PEE network constructed from perfect tensors, and again set $\log v = 1$ so that $\mathcal{N}(\gamma_{A}) = \text{Area}(\gamma_{A})/{4G}$. To reproduce the RT formula, it suffices to show that the mapping $\P$ is an isometry from $\gamma_A$ to $A$, and $\Q$ is an isometry from $\gamma_{A}$ to $\bar{A}$.

Now we prove the isometric property of $\P$ and $\Q$ using an analogous greedy algorithm proposed in \cite{Pastawski:2015qua}. The procedure is decomposed into the following steps:
\begin{enumerate}
	\item Given any connected boundary region $A$, let us consider a homologous surface $\Sigma_A$ which cuts the PEE tensor network open, and we denote the region enclosed by $A$ and $\Sigma_A$ as $\mathcal{W}_{\Sigma_A}$ (we do not include the sites on $\Sigma_A$ into $\mathcal{W}_{\Sigma_A}$). In Fig.\ref{fig:3 cases}, we mark $\mathcal{W}_{\Sigma_A}$ as the blue region and $\Sigma_A$ as the dashed red line. Note that, we start from a $\mathcal{W}_{\Sigma_A}$ that is an isometric map from the legs on $\Sigma_A$ to those on $A$. Here, the legs on $\Sigma_A$ means the legs emanating from $\Sigma_A$ into the blue region.
	
	\item Locally we consider any site on $\Sigma_A$ and analyze whether the tensor on this site represents an isometry from the legs outside $\mathcal{W}_{\Sigma_A}$ to those inside $\mathcal{W}_{\Sigma_A}$. In Fig.\ref{fig:3 cases}, the black dashed line is the PEE thread tangent to $\Sigma_A$. Excluding the two legs on the dashed black thread, we find half of the legs stays inside $\mathcal{W}_{\Sigma_A}$, while the other half stays outside. 
	
	\item If the two dashed legs also stay in $\mathcal{W}_{\Sigma_A}$, then the number of legs in  $\mathcal{W}_{\Sigma_A}$ is greater than the number of outside legs (see Fig.\ref{fig:case 1}), hence the perfect tensor on this site represents an isometric map from the outside legs to the inside legs. Since the composition of two isometric maps is also isometric, we extend $\mathcal{W}_{\Sigma_A}$ to incorporate this site. Consequently, the extended region still corresponds to an isometric map from the new boundary $\Sigma_A$ to $A$.
	
	\item On the other hand, if the two dashed legs stay outside $\mathcal{W}_{\Sigma_A}$ (see Fig.\ref{fig:case 3}), then the tensor is not an isometric map from the outside legs to the inside legs, hence we should not absorb the site into $\mathcal{W}_{\Sigma_A}$.
	
	\item Therefore, the critical configuration is that $\Sigma_A$ locally matches the dashed geodesic (see Fig.\ref{fig:case 2}).
	
	\item Let us start from the $\Sigma_A$ as the boundary interval $A$. Obviously it represents an isometry from the open legs to those stretch inside the AdS space. Then $\Sigma_A$ moves into the bulk by swallowing more and more bulk sites, until each site of $\Sigma_A$ matches the local critical configuration. This means the greedy algorithm halts when $\Sigma_A$ matches the RT surface $\gamma_{A}$, such that we arrive at the least upper bound. Furthermore, the entanglement wedge $\mathcal{W}_{\gamma_{A}} $ defines a isometric map from $\gamma_{A}$ to $A$ hence the upper bound \eqref{entropybound} is saturated.
\end{enumerate}

The advantage of our HaPPY-like PEE tensor network is that, the greedy algorithm is performed exactly in the AdS space instead of some discrete graph, and the minimal homologous surface exactly matches the RT surface. Nevertheless, as in the HaPPY code, the above arguments only works for connected regions and for a time slice of AdS$_3$.

\section{Random PEE tensor network}\label{sec:Random}

In the third model, we also start from the PEE network, but set the states $\ket{\T(\v{z})}$ \eqref{eq:bulk local tensor 2} on the bulk vertices to be unit random states following \cite{Hayden:2016cfa}. We will take the random average of such states when computing the entropy quantities. To be specific, we take an arbitrary reference state $\ket{0}_{\v{z}}$ and define $\ket{\T(\v{z})}=U_{\v{z}}\ket{0}_{\v{z}}$ where $U_{\v{z}}$ is a unitary operator, then the random average is equivalent to an integration over $U_{\v{z}}$ according to the Haar probability measure. Again, after contracting all the internal legs of adjacent bulk sites along all the PEE threads using \eqref{contraction0}, we obtain the random PEE tensor network representing a boundary state for the boundary open legs.

In \cite{Hayden:2016cfa}, the authors have shown that in a general graph of network with random tensors on the vertices, if we take the large $v$ limit for the bond dimension of legs, the entanglement entropy for an arbitrary boundary region $A$ is dominant by the area term:
\begin{equation}\label{eq:Sn}
	S_{A}= \min \mathcal{N}(\Sigma_A)\log v,
\end{equation}
where $\Sigma_A$ is a surface in the network homologous to $A$, and $\mathcal{N}(\Sigma_A)$ counts the number of contracted tensor legs cut by $\Sigma_A$. Note that, the above result applies to general network graphs, including our PEE network. In this case, counting the minimal number of contracted legs cut by $\Sigma_A$ is equivalent to computing the area of the minimal homologous surface $\gamma_{A}$, which is just the RT surface. 

Note that, in order to obtain the result \eqref{eq:Sn}, we need to set the contribution $\log v$ to the entropy from each thread to be a large number. Under this setup, we need to divide the density of the PEE threads by an overall constant $\log v$, such that the boundary PEE structure and the total crossing of the bulk surfaces are unchanged. In other words, the density of intersections between any bulk surface and the PEE network becomes $1/(4G \log v)$, such that $\min\mathcal{N}(\Sigma_A) =\text{Area}(\gamma_A)/(4G \log v)$. Hence we reproduced the RT formula,
\begin{equation}
	S_{A}= \min \mathcal{N}(\Sigma_A)\log v=\frac{\text{Area}(\gamma_A)}{4G}
\end{equation}
We emphasize that the result in \eqref{eq:Sn} applies to generic boundary regions, including disconnected and non-spherical connected regions. 

\section{Discussions}\label{sec:Discussions}

In this work, we assign quantum states to the vertices in the PEE network of AdS space and build three PEE tensor network models, where the entanglement entropy can be explicitly computed. In our models, counting the minimal number of cuts in the tensor network is exactly computing the area of the RT surface in the semi-classical geometry of gravity. This provides the first example of taking natural continuous limit for a discrete tensor network to perfectly match the geometric background. In our construction, the PEE network is determined by the PEE structure of the boundary CFT state; equivalently, it is built from the geodesics in the kinematic space, which itself is an intrinsic property of the background geometry. Thus, the network is not manually adjusted, and the RT formula is not used as an input.

In the factorized model, the RT formula is only understood for spherical regions. This is not surprising as in this model all the PEE threads are decoupled from each other, which means the model is oversimplified for AdS/CFT.  The factorized model reminds us of the bit thread configurations \cite{Freedman:2016zud,Headrick:2017ucz}, and we recommend the readers to consult \cite{Lin:2023rxc} for their similarities and differences\footnote{In \cite{Lin:2023rxc} it was demonstrated that bit threads configuration for spherical regions $A$ can be reproduced by the superposition of the vector fields describing the PEE threads emanating from $A$.}. 

More interesting models can be developed by including coupling among the PEE threads. In section \ref{sec:HaPPY}, besides regarding the tensors as tensor product of two-index tensors, we also construct a HaPPY-like \cite{Pastawski:2015qua} PEE tensor network, where we associate a perfect tensor to each site in the PEE network. The RT formula for connected regions in AdS$_3$ is derived using a similar greedy algorithm developed in \cite{Pastawski:2015qua} for our HaPPY-like PEE tensor network.

As in \cite{Hayden:2016cfa,Pastawski:2015qua}, we can also set open legs for the bulk sites to built a holographic code mapping the bulk states to the boundary states, then we can study the bulk quantum correction to the holographic entanglement entropy and identify the analogue of the quantum extremal surface \cite{Engelhardt:2014gca}. Perhaps the most interesting future direction is to generalize the PEE tensor network to geometric background beyond the Poincar\'e AdS space, thus construct a new framework to study holographic correspondence beyond AdS/CFT. This is nontrivial, as the Crofton formula requires essential modifications to apply to more general Riemannian manifolds. In such configurations, besides geodesics anchored on the boundary, one must also consider those that terminate at black hole singularities or end in the bulk.

	Our work may provide a new framework for studying quantum field theories on a smooth geometric background using continuous tensor network models, without losing the information of background geometry. This could be an advantage compared with the other continuous tensor network models (see also \cite{Quijandria:2014gsa,Tagliacozzo:2014bta,Draxler:2016nxp,Ganahl:2016mjv,DelasCuevas:2017yxc,Ganahl:2018jma,Cotler:2018ufx,Osborne:2019bsq,Tang:2020vyv,Tuybens:2020uth,Karanikolaou:2020rld,Stottmeister:2020ezd,Zou:2019xbi,Witteveen:2019lsk,Campos:2021zce,Osborne:2021ppp,Lukin:2022rds,Martyn:2022oll,Schmoll:2023eez,Vardian:2024jll,Chemissany:2025vye} for an incomplete list of recent developments). It is particularly interesting to apply this framework in flat space, where the Crofton formula holds and most non-gravitational field theories are studied.

\section*{}

\acknowledgments

The authors are supported by the NSFC Grant No. 12447108 and the Shing-Tung Yau Center of Southeast University. We thank Debarshi Basu for helpful discussions and related collaborations.

\appendix

\section{The RT formula in the factorized PEE tensor network}\label{sec:RTinfactoried}

We first introduce some preliminary about isometric tensor for the section to be self-contained.

\begin{defn}
	An isometry from $\H_A$ to $\H_B$ is a linear map $T:\H_A\rightarrow\H_B$ with the property that it preserves the inner product (i.e. $\braket{\psi}{\phi}=\braket{T\psi}{T\phi},~\forall \psi,\phi\in\H_A$), such that
	\begin{itemize}
		\item $\dim (A)\leq\dim (B)$;
		\item $T^\dagger T=\id_A,~TT^\dagger=\Pi_B$ where $\Pi_B$ is a projector on $Im(V)\subset \H_B$.
	\end{itemize}
\end{defn}
Note that a unitary map is also an isometry, in the case that $\dim (A)=\dim (B)$. Given a complete orthonormal basis $\{\ket{a}\}$ for $\H_A$, and $\{\ket{b}\}$ for $\H_B$, we represent an isometry $T$ as a two-index tensor:
\begin{equation}
	T:|a\rangle \mapsto T_{b a} |b\rangle,
\end{equation}
Then we have
\begin{equation}
	T_{ba^{\prime}}^{*} T_{b a}=T_{a^{\prime} b}^{\dagger} T_{b a}=\delta_{a^{\prime} a}
\end{equation}
which we call an isometric tensor. 

\begin{prop}\label{prop:isometry}
	Given an isometry $T:\left|a_1 a_2\right\rangle \mapsto T_{b a_1 a_2}|b\rangle $ and if $\H_A=\H_{A_1}\otimes\H_{A_2}$, then there exists an isometry (up to a constant) $\widetilde{T}: \mathcal{H}_{A_2} \rightarrow \mathcal{H}_B \otimes \mathcal{H}_{A_1}$ acting as 
	\begin{equation}
		\widetilde{T}:\left|a_2\right\rangle \mapsto T_{b a_1 a_2}\left|b a_1\right\rangle
	\end{equation}
	which obeys $\widetilde{T}^{\dagger} \widetilde{T}=\operatorname{dim}\left(A_1\right) I_{A_1}$.
\end{prop}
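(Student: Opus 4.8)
The plan is to directly unpack the definitions and verify the isometry property $\widetilde{T}^\dagger\widetilde{T}=\dim(A_1)I_{A_2}$ by a short index computation. Concretely, the hypothesis is that $T:\H_{A_1}\otimes\H_{A_2}\to\H_B$ is an isometry, so in components $T_{ba_1a_2}$ satisfies $T^*_{ba_1'a_2'}T_{ba_1a_2}=\delta_{a_1'a_1}\delta_{a_2'a_2}$ (summing the repeated $b$ index, using the isometric-tensor identity stated just before the proposition with the composite label $(a_1a_2)$). We define $\widetilde T$ by $\widetilde T:\ket{a_2}\mapsto T_{ba_1a_2}\ket{b}\ket{a_1}$, i.e. we simply reinterpret the $A_1$ legs as output legs rather than input legs. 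The adjoint then acts as $\widetilde T^\dagger:\ket{b}\ket{a_1}\mapsto T^*_{ba_1a_2}\ket{a_2}$.

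The key step is the composition:
\begin{equation}
\widetilde T^\dagger\widetilde T:\ket{a_2}\mapsto T_{ba_1a_2}\,\widetilde T^\dagger\bigl(\ket{b}\ket{a_1}\bigr)=T_{ba_1a_2}\,T^*_{ba_1a_2'}\ket{a_2'}.
\end{equation}
Now I would note that the sum $T_{ba_1a_2}T^*_{ba_1a_2'}$ runs over \emph{both} $b$ and $a_1$, whereas the isometry identity only gives $\delta_{a_1a_1'}\delta_{a_2a_2'}$ when $a_1$ is held fixed. Summing that identity over $a_1=a_1'$ produces $\sum_{a_1}\delta_{a_1a_1}\,\delta_{a_2a_2'}=\dim(A_1)\,\delta_{a_2a_2'}$. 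Hence $\widetilde T^\dagger\widetilde T=\dim(A_1)\,I_{A_2}$, which is exactly the claimed relation; equivalently, $\widetilde T/\sqrt{\dim(A_1)}$ is a genuine isometry, and for this to be consistent one needs $\dim(A_2)\le\dim(B)\dim(A_1)$, which follows from the original $\dim(A_1)\dim(A_2)\le\dim(B)$. One should also check the elementary point that $\widetilde T$ is well-defined independently of the chosen orthonormal bases, but this is immediate since $\widetilde T$ is just a re-partitioning of the legs of the fixed multilinear object $T$.

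I do not expect a genuine obstacle here: the statement is essentially the standard "an isometry remains an isometry up to a normalization when input legs are moved to the output," and the only subtlety is bookkeeping the constant $\dim(A_1)$ correctly, which comes precisely from tracing over the $A_1$ indices that have been promoted to outputs. The mildest care is needed in stating "isometry up to a constant": strictly $\widetilde T$ itself is not inner-product preserving, only $\widetilde T/\sqrt{\dim(A_1)}$ is, and I would phrase the conclusion in terms of $\widetilde T^\dagger\widetilde T=\dim(A_1)I_{A_2}$ as the proposition does, noting the rescaled map is a true isometry.
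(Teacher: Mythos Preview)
Your proof is correct and follows essentially the same index computation as the paper: start from the isometry condition $T^*_{ba_1'a_2'}T_{ba_1a_2}=\delta_{a_1'a_1}\delta_{a_2'a_2}$, set $a_1'=a_1$, sum over $a_1$, and read off $\widetilde T^\dagger\widetilde T=\dim(A_1)I_{A_2}$. The additional remarks you include (the dimension inequality, basis independence, and the rescaling to a genuine isometry) are fine elaborations but not needed for the core argument.
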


\begin{proof}
	\begin{equation}
		\begin{aligned}
			&T^*_{b a'_1 a'_2 }T_{b a_1 a_2}=\delta_{a'_1 a'_2,a_1 a_2}=\delta_{a'_1a_1}\delta_{a'_2a_2}\\
			\Rightarrow\quad &T_{b a_1 a'_2 }^* T_{b a_1 a_2} =\delta_{a_1a_1}\delta_{a_2'a_2}=\dim(A_1)\delta_{a'_1a_1}.
		\end{aligned}
	\end{equation}
\end{proof}

\begin{prop}\label{prop:orthonormal}
	Given an isometry $T:|a\rangle \mapsto T_{b a} |b\rangle$ and, $\{\ket{a}\}$ and $\{\ket{b}\}$ form complete orthonormal bases for $\H_A$ and $\H_B$ respectively, then
	\begin{equation}\label{eq:orthonomral}
		\{\ket{T_{a}}\equiv T\ket{a}=T_{b a} |b\rangle:1\leq a\leq \dim(A)\}
	\end{equation}
	forms a set of orthonormal vectors for $\H_B$. Note that the index $a$ ranges over $\dim(A) (\leq \dim(B))$ values, so that \eqref{eq:orthonomral} may not forms a basis for $\H_B$.
\end{prop}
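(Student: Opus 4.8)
The plan is to reduce the statement to the defining property of an isometry — preservation of inner products, equivalently $T^{\dagger}T=\id_A$ — evaluated on the orthonormal basis $\{\ket{a}\}$. First I would compute, for any two indices $a,a'$ in the range $1\le a,a'\le\dim(A)$, the overlap $\braket{T_a}{T_{a'}}=\bra{a}T^{\dagger}T\ket{a'}=\braket{a}{a'}=\delta_{aa'}$. At the level of components this is the same computation: writing $\ket{T_a}=T_{ba}\ket{b}$ and using $\braket{b}{b'}=\delta_{bb'}$ gives $\braket{T_a}{T_{a'}}=T^{*}_{ba}T_{ba'}=T^{\dagger}_{ab}T_{ba'}=\delta_{aa'}$, which is exactly the isometric tensor identity recorded just above the proposition. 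Either route yields the same conclusion.

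From this the proposition is immediate: the family $\{\ket{T_a}\}$ consists of mutually orthogonal unit vectors of $\H_B$, hence an orthonormal set. For the closing remark I would note that $a$ runs over only $\dim(A)$ values, and since an isometry requires $\dim(A)\le\dim(B)$, the span of $\{\ket{T_a}\}$ is precisely $\mathrm{Im}(T)\subseteq\H_B$ — a proper subspace whenever $\dim(A)<\dim(B)$ — so the set is a basis of $\H_B$ only in the unitary case $\dim(A)=\dim(B)$.

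I do not anticipate a genuine obstacle here: the content is a basis-level restatement of the inner-product-preserving property of isometries. The only points needing mild care are the index bookkeeping (contracting the $\ket{b}$ labels correctly through $\braket{b}{b'}=\delta_{bb'}$) and holding the range of $a$ fixed at $\dim(A)$, so as not to overclaim completeness in $\H_B$.
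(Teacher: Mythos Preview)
Your proposal is correct and matches the paper's proof essentially line for line: the paper simply computes $\braket{T_{a'}}{T_a}=\bra{a'}T^\dagger T\ket{a}=\bra{a'}\id_A\ket{a}=\delta_{a'a}$, which is exactly your first computation. Your component-level version and the remark about completeness in $\H_B$ are fine elaborations but are not needed beyond what the paper records.
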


\begin{proof}
	\begin{equation}
		\braket{T_{a'}}{T_a}=\diracprod{a'}{T^\dagger T}{a}=\diracprod{a'}{\id_A}{a}=\delta_{a^{\prime} a}.
	\end{equation}
	
\end{proof}

Then let us look at the building blocks of our factorized PEE tensor network, which are the two-index tensors $T_{ab}$ representing pairs of maximally entangled qudits. It implies that each tensor $T_{ab}$ correspond to a unitary map from one leg to the other. Contracting these two-index tensors along PEE threads results in geodesic chords with two-index tensors representing unitary maps from the qudit in one of the endpoints to another. This includes the two-index tensor $T(\v{x},\v{y})$ in \eqref{eq:unitary tensor} for a complete PEE thread. More explicitly, \eqref{eq:unitary tensor} implies
\begin{equation}
	T(\v{x},\v{y})T^\dagger(\v{x},\v{y})=\frac{\id_{\v{x}}}{v},\quad T^\dagger(\v{x},\v{y}) T(\v{x},\v{y})=\frac{\id_{\v{y}}}{v},
\end{equation} 
hence $T(\v{x},\v{y}):\ket{b}_{\v{y}}\mapsto T_{ab}(\v{x},\v{y})\ket{a}_{\v{x}}$ is a unitary map between the pair of qudits shown in Fig.\ref{fig:tensor at a PEE thread} at the two boundary sites $\v{x}$ and $\v{y}$. We can regard one leg of $T_{ab}(\v{x},\v{y})$ as input while the other as output, hence define a unitary flow direction for all the threads, i.e. each PEE thread $\mathcal{C}(\v{x},\v{y})$ has one incoming leg and one outgoing leg, giving a unitary map individually.

Now we consider any spherical boundary region $A$ and, as in the main text, classify the PEE threads into three types: the $\mathcal{C}^{A}$, $\mathcal{C}^{\bar{A}}$ and $\mathcal{C}^{A\bar{A}}$, which are represented by the gray, orange and blue semi-circles in Fig.\ref{fig:various PEE threads} respectively for the special case of AdS$_3$. Note that all the PEE threads are semi-circles and the RT surface $\gamma_{A}$ of $A$ is a semi-sphere, it is easy to observe that, all the PEE threads in $\mathcal{C}^{A}$ and $\mathcal{C}^{\bar{A}}$ do not intersect with $\gamma_{A}$, while any thread in  $\mathcal{C}^{A\bar{A}}$ intersects $\gamma_A$ exactly once. Next we cut the AdS space open along the RT surface $\gamma_{A}$, giving open legs living on $\gamma_{A}$. All the legs on $\gamma_{A}$ are all connected to a subset of the legs on $A$, which we call $A_1$. It is convenient to divide the rest of the legs in $A$ into two halves $A_2\cup A_3$ such that the gray threads connect the legs in $A_2$ to those in $A_3$. We further set the unitary flow direction for the threads $\mathcal{C}^{A\bar{A}}$ to be from the legs in $\gamma_{A}$ to those in $A_1$, and set the flow for the $\mathcal{C}^{A}$ threads to be from $A_2$ to $A_3$, then the threads in the entanglement wedge of $A$ defines a unitary evolution from the legs in $\gamma_{A}\cup A_3$ to the legs $A_1\cup A_2$.  Similarly, we can divide the legs in $\bar{A}$ into $\bar{A}=\bar{A}_1\cup \bar{A}_2 \cup\bar{A}_3$, and set $\bar{A}_1$  to be the legs connected to $\gamma_A$, then the PEE threads defines a unitary evolution from the legs in $\bar{A}_1\cup \bar{A}_2$ to those in $\gamma_{A}\cup \bar{A}_3$. The whole PEE tensor network is equivalent to a quantum circuit shown in Fig.\ref{fig:circuit}.

It is easy to see that, the open legs $A_2\cup A_3$ do not contribute to the entanglement entropy $S_A$, as they just correspond to a set of maximally entangled qudits which make a pure state. The only non-zero contribution to $S_A$ comes from the legs $A_1$, which are qudits maximally entangled to the legs $\bar{A}_1$. If we set the entanglement contained in each thread to be one, then $S_A$ is just given by the number of threads in $\mathcal{C}^{A\bar{A}}$, which equals to the number of intersections between $\gamma_A$ and the PEE threads. Since the density of intersections on $\gamma_{A}$ is $1/(4G)$ everywhere on $\gamma_{A}$, we reproduce the RT formula $S_A=\frac{\text{Area}(\gamma_{A})}{4G}$.

\begin{figure}[htbp]
	\centering
	\subfloat[]{\begin{tikzpicture}[scale=1.1]
			\draw[thick]
			(-2.7,0)--(4,0);
			\draw[line width=1.2pt,red,dashed]
			(2,0) arc (0:180:2);
			\tikzset{myarr/.style={-{Stealth[length=3pt,width=3pt,inset=0.2pt]}}}
			\draw[thick,orange!60]
			(3.8,0) arc (0:180:0.7);
			\draw[thick,orange!60]
			(2.5,0) arc (0:180:2.5);
			\draw[thick,myarr,orange!60] (3.8,0) arc (0:130:0.7);
			\draw[thick,myarr,orange!60] (2.5,0) arc (0:110:2.5);
			\draw[thick,black!60]
			(0.5,0) arc (0:180:0.5);
			\draw[thick,black!60]
			(1.8,0) arc (0:180:0.5);
			\draw[thick,black!60]
			(1.6,0) arc (0:180:1.6);
			\draw[thick,black!60]
			(-0.8,0) arc (0:180:0.5);
			\draw[thick,myarr,black!60] (0.5,0) arc (0:130:0.5);
			\draw[thick,myarr,black!60] (1.8,0) arc (0:130:0.5);
			\draw[thick,myarr,black!60] (-0.8,0) arc (0:110:0.5);
			\draw[thick,myarr,black!60] (1.6,0) arc (0:115:1.6);
			\draw[thick,blue!60]
			(-1.9,0) arc (0:80:0.4);
			\draw[thick,blue!60]
			(-1.5,0) arc (0:82:0.8);
			\draw[thick,blue!60]
			(-1.1,0) arc (0:84:1.2);
			\draw[thick,blue!60]
			(-0.7,0) arc (0:86:1.6);
			\draw[thick,blue!60]
			(-0.3,0) arc (0:88:2);
			\draw[thick,myarr,blue!60,-{Stealth[length=3pt,width=3pt,inset=0.2pt,reversed]}]
			(-1.9,0) arc (0:60:0.4);
			\draw[thick,myarr,blue!60,-{Stealth[length=3pt,width=3pt,inset=0.2pt,reversed]}] (-1.5,0) arc (0:70:0.8);
			\draw[thick,myarr,blue!60,-{Stealth[length=3pt,width=3pt,inset=0.2pt,reversed]}] (-1.1,0) arc (0:68:1.2);
			\draw[thick,myarr,blue!60,-{Stealth[length=3pt,width=3pt,inset=0.2pt,reversed]}] (-0.7,0) arc (0:65:1.6);
			\draw[thick,myarr,blue!60,-{Stealth[length=3pt,width=3pt,inset=0.2pt,reversed]}] (-0.3,0) arc (0:60:2);
			\draw[thick,blue!60]
			(1.9,0) arc (180:100:0.4);
			\draw[thick,blue!60]
			(1.5,0) arc (180:98:0.8);
			\draw[thick,blue!60]
			(1.1,0) arc (180:96:1.2);
			\draw[thick,blue!60]
			(0.7,0) arc (180:94:1.6);
			\draw[thick,blue!60]
			(0.3,0) arc (180:92:2);
			\draw[thick,myarr,blue!60,-{Stealth[length=3pt,width=3pt,inset=0.2pt,reversed]}] (1.9,0) arc (180:120:0.4);
			\draw[thick,myarr,blue!60,-{Stealth[length=3pt,width=3pt,inset=0.2pt,reversed]}] (1.5,0) arc (180:110:0.8);
			\draw[thick,myarr,blue!60,-{Stealth[length=3pt,width=3pt,inset=0.2pt,reversed]}] (1.1,0) arc (180:112:1.2);
			\draw[thick,myarr,blue!60,-{Stealth[length=3pt,width=3pt,inset=0.2pt,reversed]}] (0.7,0) arc (180:115:1.6);
			\draw[thick,myarr,blue!60,-{Stealth[length=3pt,width=3pt,inset=0.2pt,reversed]}] (0.3,0) arc (180:120:2);
			\filldraw[black] (2,0) circle (0.4pt);
			\filldraw[black] (2.4,0) circle (0.4pt);
			\filldraw[black] (2.5,0) circle (0.4pt);
			\filldraw[black] (-2.5,0) circle (0.4pt);
			\filldraw[black] (3.8,0) circle (0.4pt);
			\filldraw[black] (1.9,0) circle (0.4pt);
			\filldraw[black] (1.8,0) circle (0.4pt);
			\filldraw[black] (1.6,0) circle (0.4pt);
			\filldraw[black] (1.5,0) circle (0.4pt);
			\filldraw[black] (1.1,0) circle (0.4pt);
			\filldraw[black] (0.8,0) circle (0.4pt);
			\filldraw[black] (0.7,0) circle (0.4pt);
			\filldraw[black] (0.5,0) circle (0.4pt);
			\filldraw[black] (0.3,0) circle (0.4pt);
			\filldraw[black] (-2,0) circle (0.4pt);
			\filldraw[black] (-1.9,0) circle (0.4pt);
			\filldraw[black] (-1.8,0) circle (0.4pt);
			\filldraw[black] (-1.6,0) circle (0.4pt);
			\filldraw[black] (-1.5,0) circle (0.4pt);
			\filldraw[black] (-1.1,0) circle (0.4pt);
			\filldraw[black] (-0.8,0) circle (0.4pt);
			\filldraw[black] (-0.7,0) circle (0.4pt);
			\filldraw[black] (-0.5,0) circle (0.4pt);
			\filldraw[black] (-0.3,0) circle (0.4pt);
			\draw[] (0,-0.2) node {$A$};
			\draw[] (0,2.2) node {\textcolor{red}{$\gamma_A$}};
		\end{tikzpicture}
		\label{fig:various PEE threads}
	}
	\qquad\qquad
	\subfloat[]{
		\begin{tikzpicture}[scale=0.4]
			\draw[thick]
			(-5.5,0)--(6,0);
			\draw[thick]
			(-5.5,9)--(6,9);
			\draw[thick,blue!50]
			(0,0)--(0,9);
			\draw [-{Stealth[length=3pt,width=3pt,inset=0.2pt,reversed]},blue!50,thick] (0,0) -- (0,3);
			\draw [-{Stealth[length=3pt,width=3pt,inset=0.2pt,reversed]},blue!50,thick] (1,0) -- (1,3);
			\draw [-{Stealth[length=3pt,width=3pt,inset=0.2pt,reversed]},blue!50,thick] (2,0) -- (2,3);
			\draw [-{Stealth[length=3pt,width=3pt,inset=0.2pt,reversed]},blue!50,thick] (0,0) -- (0,6);
			\draw [-{Stealth[length=3pt,width=3pt,inset=0.2pt,reversed]},blue!50,thick] (1,0) -- (1,6);
			\draw [-{Stealth[length=3pt,width=3pt,inset=0.2pt,reversed]},blue!50,thick] (2,0) -- (2,6);
			\draw[thick,blue!50]
			(1,0)--(1,9);
			\draw[thick,blue!50]
			(2,0)--(2,9);
			\draw[thick,black!60]
			(3.5,0)--(3.5,9);
			\draw[thick,black!60]
			(4.5,0)--(4.5,9);
			\draw[thick,black!60]
			(5.5,0)--(5.5,9);
			\draw [-{Stealth[length=3pt,width=3pt,inset=0.2pt,reversed]},black!60,thick] (3.5,0) -- (3.5,3);
			\draw [-{Stealth[length=3pt,width=3pt,inset=0.2pt,reversed]},black!60,thick] (4.5,0) -- (4.5,3);
			\draw [-{Stealth[length=3pt,width=3pt,inset=0.2pt,reversed]},black!60,thick] (5.5,0) -- (5.5,3);
			\draw [-{Stealth[length=3pt,width=3pt,inset=0.2pt,reversed]},black!60,thick] (3.5,0) -- (3.5,6);
			\draw [-{Stealth[length=3pt,width=3pt,inset=0.2pt,reversed]},black!60,thick] (4.5,0) -- (4.5,6);
			\draw [-{Stealth[length=3pt,width=3pt,inset=0.2pt,reversed]},black!60,thick] (5.5,0) -- (5.5,6);
			\draw[thick,orange!60]
			(-2.5,0)--(-2.5,9);
			\draw[thick,orange!60]
			(-3.5,0)--(-3.5,9);
			\draw[thick,orange!60]
			(-4.5,0)--(-4.5,9);
			\draw [-{Stealth[length=3pt,width=3pt,inset=0.2pt,reversed]},orange!60,thick] (-2.5,0) -- (-2.5,3);
			\draw [-{Stealth[length=3pt,width=3pt,inset=0.2pt,reversed]},orange!60,thick] (-3.5,0) -- (-3.5,3);
			\draw [-{Stealth[length=3pt,width=3pt,inset=0.2pt,reversed]},orange!60,thick] (-4.5,0) -- (-4.5,3);
			\draw [-{Stealth[length=3pt,width=3pt,inset=0.2pt,reversed]},orange!60,thick] (-2.5,0) -- (-2.5,6);
			\draw [-{Stealth[length=3pt,width=3pt,inset=0.2pt,reversed]},orange!60,thick] (-3.5,0) -- (-3.5,6);
			\draw [-{Stealth[length=3pt,width=3pt,inset=0.2pt,reversed]},orange!60,thick] (-4.5,0) -- (-4.5,6);
			\draw[thick,blue!50]
			(0,4 )--(0,5);
			\draw[thick,blue!50]
			(2,4)--(2,5);
			\draw[thick,black!60]
			(3.5,4)--(3.5,9);
			\draw[thick,black!60]
			(5.5,4)--(5.5,9);
			\draw[thick,orange!60]
			(-4.5,7.5)--(-4.5,9);
			\draw[thick,orange!60]
			(-2.5,7.5)--(-2.5,9);
			\draw[thick,blue!50]
			(0,7.5)--(0,9);
			\draw[thick,blue!50]
			(2,7.5)--(2,9);
			\draw[red, dashed,line width=1.2pt, rounded corners=5mm]
			(-1,0)  -- (-1,4.5) -- (3,4.5) -- (3,9);
			\draw[] (1.0,-0.6) node {$A_1$};
			\draw[] (4.5,-0.6) node {$A_2$};
			\draw[] (-3.5,-0.6) node {$\bar{A}_3$};
			\draw[] (-3.5,9.55) node {$\bar{A}_2$};
			\draw[] (4.5,9.55) node {$A_3$};
			\draw[] (1.0,9.55) node {$\bar{A} _1$};
			\draw[] (-1.6,2.5) node {\textcolor{red}{$\gamma_A$}};
		\end{tikzpicture}	
		\label{fig:circuit}
	}
	\caption{(a) The three type of PEE threads, denoted by  $\mathcal{C}^{A}$, $\mathcal{C}^{\bar{A}}$ and $\mathcal{C}^{A\bar{A}}$, are labeled by gray, orange and blue solid simi-circles respectively, while the RT surface $\gamma_{A}$ is drawn as the red dashed line; (b) The circuit interpretation of the factorized PEE tensor network. Each PEE thread individually gives a unitary map, and the RT surface $\gamma_{A}$ only cuts the threads in   $\mathcal{C}^{A\bar{A}}$ once. The graph can be compared with Fig.19 of \cite{Pastawski:2015qua} where the ``circuit'' is built from the net effect of perfect tensors of certain bulk subregions as unitary gates.}
\end{figure}

It is also useful to derive the RT formula using the arguments in \cite{Pastawski:2015qua} for more generic tensor network models. Again, let us cut the tensor network open along a homologous surface $\mathcal{C}$ whose boundary matches the boundary of $A$, such that the tensor network is divided into two parts $\P$ and $\Q$, which are maps from $\mathcal{C}$ to $A$ and from $\mathcal{C}$ to $\bar{A}$ respectively. According to the proposition \eqref{prop:orthonormal}, if the tensors $\P$ and $\Q$ are isometries, then $\ket{\Q_{\v{c}}}$ and $\ket{\P_\v{c}}$ are two sets of orthonormal vectors. This implies $\rho_{A}$ is proportional to the identity such that the upper bound \eqref{entropybound} for entanglement entropy is saturated. 

In the case of our factorized PEE tensor network model, we have:
\begin{itemize}
	\item the homologous surface with minimal number of cuts is exactly the RT surface $\gamma_{A}$, and $\mathcal{N}(\gamma_{A})={\text{Area}(\gamma_{A})}/{4G}$,
	\item by regarding $\{\gamma_{A},A_3,A_1\cup A_2\}$ as $\{a_2,a_1,b\}$ in the proposition \ref{prop:isometry}, we can construct an isometry $\P$ that maps $\gamma_A$ to $A$, likewise we can construct an isometry $\Q$ from $\gamma_{A}$ to $\bar{A}$.
\end{itemize}
Therefore, the inequality is saturated and the RT formula follows upon setting $\log v=1$.

\section*{}

\bibliographystyle{JHEP}
\bibliography{bib}
\end{document}